\newcommand{\be}{\begin{equation}}
\newcommand{\ee}{\end{equation}}
\newcommand{\ba}{\begin{eqnarray}}
\newcommand{\ea}{\end{eqnarray}}
\newcommand{\baa}{\begin{eqnarray*}}
\newcommand{\eaa}{\end{eqnarray*}}
\newcommand{\bb}{}
\newcommand{\bi}[1]{\bibitem{#1}}
\newcommand{\lab}[1]{\label{#1}}
\newcommand{\re}[1]{(\ref{#1})}
\newcounter{my}
\newcommand{\he}%
   {\stepcounter{equation}\setcounter{my}%
   {\value{equation}}\setcounter{equation}0%
   }%
\newcommand{\she}%
   {\setcounter{equation}{\value{my}}%
    }%
\newtheorem{pr}{Proposition}
\newtheorem{theorem}{Theorem}[section]
\theoremstyle{definition}
\newtheorem{remark}[theorem]{Remark}
\numberwithin{equation}{section}
\subjclass[2020]{Primary: 47A75 ; Secondaries: 11C20, 39A70, 39B42, 65T50}
\keywords{discrete Fourier transform; number operator; eigenvalues and eigenvectors}
\begin{document}

\title[On the eigenvectors of the 5D number operator]{On the eigenvectors of the 5D 
discrete Fourier\\ transform number operator in Newtonian basis}

\author{Natig Atakishiyev}
\address{Universidad Nacional Aut{\'o}noma de M{\'e}xico, Instituto de Matem{\'a}ticas,  
Unidad Cuernavaca,  \newline
Cuernavaca, 62210, Morelos, M{\'e}xico}

\begin{abstract}
A simple analytic approach to the evaluation of the eigenvalues and eigenvectors 
$f_n$ of the $5D$ discrete number operator $\mathcal{N}_5=A^{\intercal}_5 A_5$ 
is formulated. This approach is based on the symmetry of the intertwining operators 
$A_5$ and $A^{\intercal}_5$ with respect to the discrete reflection operator. A 
procedure for {\it sparsealization} the intertwining operators $A_5$ and $A^{\intercal}_5$ 
has been developed, which made it possible to establish a discrete analog of the well-known 
continuous-case formula $\psi_n(x) = \frac{1}{\sqrt{n!}}\, ({\bf {a}^{\dagger}})^n \psi_0(x)$. 
A discrete analog for the eigenvectors $f_n$ of another continuous-case formula $\psi_n(x) 
= c_n^{-1} H_n(x)\,\psi_0(x),\,\,c_n= \sqrt{2^n\, n!}$, is constructed in terms of the Newtonian 
basis polynomials ${\mathcal P}_n (X_5),\,n\in {\mathbb Z_5}$, times the lowest eigenvector 
$f_0$, as well. 
\end{abstract}

\maketitle
\section{Introduction}
First, let me recall that the discrete (finite) Fourier transform (DFT) based on five 
points is represented by a $5\times 5$ unitary symmetric matrix $\Phi_5$  with 
entries (see, for example, \cite{McCPar}--\cite{Ata})
\be
(\Phi_5)_{kl} = 5^{-1/2} \: q^{kl}, \qquad k,l \in {\mathbb Z_5}:=\{0,1,2,3,4\},
\label{DFT}
\ee
where $ q=\exp{\left(2 \pi {\rm i}\, / 5\right)} $ is a primitive $5$-th root of unity. 
Those vectors $f_k$, which are solutions of the standard equations
\be
\sum_{k=0}^{4} {(\Phi_5)_{m,n}\, (f_k)_n} = {\lambda}_k\,(f_k)_m , 
\qquad k\in {\mathbb Z_5},
\label{ev}
\ee
then represent five eigenvectors of operator $\Phi_5$, associated with the eigenvalues 
${\lambda}_k$. Because the fourth power of $\Phi_5$ is a unit matrix, only four distinct 
eigenvalues among ${\lambda}_k$s are $\pm 1$ and $\pm {\rm i}$.

In addition, the discrete analog of the reflection operator $P$ (defined on the full real line 
$x\in {\mathbb R}$ as $P\,x = -\, x$), associated with the DFT operator (\ref{DFT}) is 
represented by the $5 \times 5$ matrix:
\be
P_d:= C^{\intercal}_5 J_5 \equiv J_5\,C_5\,,
\label{Pd}
\ee
where $C_5$ is the {\it 5D basic circulant permutation} matrix with entries $ (C_5)_{kl}
={\delta}_{k,l-1}$ and $J_5$ is the $5 \times 5$  {\lq}backward identity{\rq} permutation 
matrix with ones on the secondary diagonal (see \cite{HJ}, pages 26 and 28, respectively). 
It is readily verified that the DFT operator (\ref{DFT}) is $P_d$-symmetric, that is, the 
commutator $[\Phi_5, P_d] = \Phi_5\,P_d - P_d\,\Phi_5=0$. Therefore, similar to the 
continuous case, governed by the  reflection operator $P$, the eigenvectors of the DFT
operator $\Phi_5$ should be either $P_d$-symmetric or $P_d$-antisymmetric.

In the present work, additional findings concerning algebraic properties of two intertwining 
operators associated with the DFT matrix \re{DFT}  are discussed. These operators are 
represented by matrices $A_5$ and $A^{\intercal}_5$ of the same size $5\times 5$, such 
that the intertwining relations 
\be
A_5 \,\Phi_5 ={\rm i}\,\Phi_5 A_5, \qquad A^{\intercal}_5 \,\Phi_5 
= -\, {\rm i}\, \Phi_5 \, A^{\intercal}_5, 
\lab{inter} 
\ee
are valid. Matrices $A_5$ and $A^{\intercal}_5$ have emerged in a paper \cite{MesNat} 
devoted to the problem of determining an explicit form for the  difference operator that 
governs the eigenvectors of the DFT matrix $\Phi_5$. They can be interpreted as discrete 
analogs of the quantum harmonic oscillator lowering and raising operators ${\bf a}= 2^{-1/2}
\Big( x + \frac{d}{dx}\Big)$ and ${\bf {a}^{\dagger}}= 2^{-1/2}\Big( x - \frac{d}{dx}\Big)$; 
their algebraic properties have been studied in detail in \cite{AA2016}--\cite{AAZh}.  In particular, 
it was shown that the operators $A$ and $A^{\intercal}$ form a cubic algebra $\mathcal{C}_q$ 
with $q$ a root of unity \cite{AAZh}. This algebra is intimately related to the two other well-known 
realizations of the cubic algebra: the Askey-Wilson algebra \cite{Zhe_AW}--\cite{Tom2} and 
the Askey-Wilson-Heun algebra \cite{Bas_H}. This particular cubic algebra $\mathcal{C}_q$,
associated with operators $A$ and $A^{\intercal}$, is certainly more complicated than 
Heisenberg-Weyl algebra, generated by the lowering and raising operators ${\bf a}$ 
and ${\bf {a}^{\dagger}}$ of the continuous case. Nevertheless, the remarkable fact 
is that it turns out possible to use the same procedure of constructing the eigenvectors 
of the discrete number operator $\mathcal{N}:=A^{\intercal} A$ in the form of the 
ladder--type hierarchy, as in the linear harmonic oscillator case in quantum mechanics 
\cite{LL}:
\be
{\bf a}\,\psi_0(x) = 0, \qquad  \psi_n(x) = \frac{1}{\sqrt{n}}\,{\bf {a}^{\dagger}}
\,\psi_{n-1}(x), \qquad n=1,2,3,...\,.
\label{LO}
\ee
Note also that from the intertwining relations (\ref{inter}) it follows that the operator 
$\mathcal{N}_5=A^{\intercal}_5 A_5$ commutes with the DFT operator $\Phi_5$, that 
is, $[\mathcal{N}_5,\Phi_5]=0$. The discrete number operator $\mathcal{N}_5$ and DFT 
operator $\Phi_5$ thus have the same eigenvectors and the former can be employed
to find an explicit form of the eigenvectors of the latter (see \cite{MesNat} for a more 
detailed discussion of this point). 

This  paper presents a novel analytical method for evaluating the eigenvalues and 
eigenvectors of the $5D$ discrete number operator $\mathcal{N}_5=A^{\intercal}_5 A_5$,
levering the symmetry of the intertwining operators $A_5$ and $A^{\intercal}_5$ 
with respect to the discrete reflection operator $P_d$. It turned out that in order to 
achieve this goal, it is necessary to isolate from the intertwining operators $A_5$ 
and $A^{\intercal}_5$ those symmetric and antisymmetric parts that annihilate 
arbitrary vectors of the same parity. In fact, this study develops a procedure for 
the {\it sparsealization} of operators $A_5$ and $A^{\intercal}_5$, which enables 
us to derive a still-missing discrete matrix analog of the well-known formulas 
\be
\psi_n(x) = \frac{1}{\sqrt{n!}}\, ({\bf {a}^{\dagger}})^n \psi_0(x)= 
c_n^{-1} H_n(x)\,\psi_0(x) ,\qquad c_n= \sqrt{2^n\, n!}
\label{psi}
\ee
associated with the continuous case for eigenvectors $f_k$ of the DFT operator $\Phi_5$.

The remainder of this paper is organized as follows. In Section 2 an account is given 
on how to resolve the problem of finding the eigenvectors $f_n$ and eigenvalues 
$\lambda_n$ of the $5D$  discrete number operator $\mathcal{N}_5=A^{\intercal}_5 A_5$. 
A detailed description of how to accomplish this task is provided in  Section 3. In Section 4, 
an explicit form of the ladder-type hierarchy is established, in which the eigenvectors $f_n$ 
form. Section 4 closes the paper by expressing the eigenvectors $f_n$  in terms of Newtonian 
basis polynomials times the lowest eigenvector $f_0$.

\section{5D intertwining operators $A_5$ and $A^{\intercal}_5$}

This section begins by deriving additional symmetry properties of the 5D intertwining 
operators $A_5$ and $A^{\intercal}_5$ \cite{MesNat}. The explicit form of the matrices, 
associated with the operators $A_5$ and $A^{\intercal}_5$, is
\be
A_5= \frac{1}{\sqrt 2}\Big(X_5 + {\rm i} Y_5\Big) =  \frac{1}{\sqrt 2}\Big(X_5 + D_5\Big)\,, 
\qquad A^{\intercal}_5= \frac{1}{\sqrt 2}\Big(X_5 - {\rm i} Y_5\Big) =  
\frac{1}{\sqrt 2}\Big(X_5 -  D_5\Big)\,,
\label{intop}
\ee
where $X_5=diag\,(\mathsf{s}_0,\mathsf{s}_1,\mathsf{s}_2,\mathsf{s}_{3},\mathsf{s}_{4}),$\,
$\mathsf{s}_n:=2\sin(2\pi n/5),\,n \in {\mathbb Z_5},$ and $Y_5= -\,\rm{ i}\,D_5 =  \rm{ i}\, 
(C^{\intercal}_5 - C_5).$ The 5D operators $X_5$ and $Y_5$ are Hermitian and act as 
finite-dimensional analogs of the coordinates and momentum operators, respectively, in quantum 
mechanics. It is remarkable that operators $X$ and $Y$ are "classical" operators with good spectral 
properties \cite{AAZh}. For operator $X_5$, the spectrum of $X_5$ is 
\be
\lambda_n\,=\,\mathsf{s}_n={\rm i}(q^{-\,n} - q^{n}),\qquad n \in {\mathbb Z_5}.
\lab{spX5}
\ee
This indicates that the spectrum (\ref{spX5}) belongs to the class of Askey--Wilson spectra 
of the type
\be
\lambda_n\,=\,C_1 q^{n} + C_2 q^{-\,n} + C_0\,.
\lab{AWsp}
\ee
The eigenvectors of operator $X_5$ are represented by the Euclidean $5$-column 
orthonormal vectors $e_k$ with the components $(e_k)_l= \delta_{kl},\, k,l \in 
{\mathbb Z_5},$  that is,
\be
X_5\, e_k = \mathsf{s}_k\,e_k\,.
\lab{X5ev}
\ee
The spectrum of the matrix $Y_5= -\,\rm{ i}\,D_5$ belongs to the same Askey--Wilson 
family because the operators $X_5$ and $Y_5= -\,\rm{ i}\,D_5$ are unitary equivalent, 
$Y_5= -\,\rm{ i}\,D_5=\Phi_5 X_5 {\Phi}_5^{\dagger}$, and hence isospectral \cite{AAZh}. 
Note that the spectrum of $X_5$ is simple, that is, it is nondegenerate. In addition, from 
the unitary equivalence of operators $X_5$ and $Y_5= -\,\rm{ i}\,D_5$ it follows that the 
eigenvectors of the latter operator are of the form:
\be
Y_5\, {\epsilon}_k= -\,\rm{ i}\,D_5\, {\epsilon}_k = \mathsf{s}_k\,{\epsilon}_k\,, \qquad
{\epsilon}_n := \Phi_5\, e_n = 5^{-1/2} \Big(1, q^n, q^{2n}, q^{3n}, q^{4n}\Big)^{\intercal}\,. 
\lab{Y5ev}
\ee

Let me draw attention now to the remarkable symmetry between the operators 
$X_5$ and $Y_5=-\,\rm{ i}\,D_5$: the operator $X_5$  is two-diagonal in the 
eigenbasis of the operator $Y_5=-\,\rm{ i}\,D_5$, 
\be
X_5\, {\epsilon}_n = {\rm i}\,({\epsilon}_{n-1} - {\epsilon}_{n+1})\,, 
\lab{X5epsn}
\ee
whereas  operator $Y_5$  is similarly two-diagonal in the eigenbasis of 
operator $X_5$,  
\be
Y_5\, e_n = -\,\rm{ i}\,D_5 \,e_n= {\rm i}\,(e_{n+1} - e_{n-1})\,.
\lab{Y5en}
\ee
It is also worth mentioning here that the $N$-column eigenvectors of 
operator $Y = -\,\rm{ i}\,D$ for a general $N$,
\be
{\epsilon}_n = \Phi_N\, e_n = \sum_{k=0}^{N-1} {(\Phi_N)_{kn}\, e_k} 
= N^{-1/2} \Big(1, q^n, q^{2n},\dots, q^{(N-1)n}\Big)^{\intercal}\,, 
\lab{d_Phi_e}
\ee
form an orthonormal basis in the $N$-dimensional complex plane ${\mathbb C}^N$ and 
are frequently used therefore as building blocks of the discrete Fourier transform in 
applications (see, for example, p.130 in \cite{HarmAnal}, where the ${\epsilon}_n$
referred to as {\it discrete trigonometric functions}).

Note also that both operators  $X_5$ and $\,D_5 =\,\rm{ i}Y_5$ are 
$P_d$-antisymmetric; that is,
\be
P_d\, X_5 +  X_5 P_d = 0, \qquad P_d \,D_5  + D_5\, P_d =0\,.
\lab{X5D5}
\ee
Moreover, since the 5D intertwining operators $A_5$ and $A^{\intercal}_5$ under
discussion are linear combinations of the operators  $X_5$ and $\,D_5 =\,\rm{ i}Y_5$,
from (\ref{X5D5}) it follows that both operators $A_5$ and $A^{\intercal}_5$  are 
also $P_d$-antisymmetric. 

It remains only to note that such a detailed discussion of the matrix structure of the 
operators  $X_5$ and $\,D_5 =\,\rm{ i}Y_5$ is dictated by the fact that it makes it 
possible to significantly simplify the problem of finding the eigenvectors of the 5D 
discrete number operator $\mathcal{N}_5 = A^{\intercal}_5\, A_5$. It is possible 
to formulate a simpler algorithm for calculating the eigenvectors of the discrete 
number operator $\mathcal{N}_5 = A^{\intercal}_5\, A_5$ by separating from the 
operators $A_5$ and $A^{\intercal}_5$ their symmetric and antisymmetric parts, 
which are essentially annihilators. A detailed description of how to achieve this goal 
is provided in the next section, where it becomes apparent that the key idea of this 
approach is the essential use of the remarkably symmetric matrix structure of the
product $\Phi_5 X_5$. Therefore, let me close this section with a discussion about 
a particular $P_d$-symmetry property of the product $\Phi_5 X_5$, which will be 
essentially used in what follows.
\begin{pr}
The product $\Phi_5 X_5$ can be represented as either
\be
\Phi_5 X_5 = s^{-1}_2\,{\mathcal A}^{(s)} + {\rm i}\, {\mathcal B}^{(s)},
\lab{Ipart}
\ee
where ${\mathcal A}^{(s)}$ is a symmetric annihilator operator that annuls 
every $P_d$-symmetric vector $f^{(s)}:=(a,b,c,c,b)^{\intercal}$, 
and ${\mathcal B}^{(s)}$ is a sparse matrix, or 
\be
\Phi_5 X_5 = s^{-1}_2\,\Big({\mathcal A}^{(a)}\,+ \,{\mathcal B}^{(a)}\Big),
\lab{IIpart}
\ee
where ${\mathcal A}^{(a)}$ is an antisymmetric annihilator operator that annuls 
every $P_d$-antisymmetric vector $f^{(a)}:=(0,b,c,- c,- b)^{\intercal}$, and 
${\mathcal B}^{(a)}$ is a sparse matrix. 
\end{pr}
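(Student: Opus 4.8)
The plan is to compute the $5\times 5$ matrix $\Phi_5 X_5$ explicitly and then exhibit the two decompositions by inspection. First I would write $X_5 = \mathrm{diag}(\mathsf{s}_0,\mathsf{s}_1,\mathsf{s}_2,\mathsf{s}_3,\mathsf{s}_4)$ and recall from the definition $\mathsf{s}_n = 2\sin(2\pi n/5)$ that $\mathsf{s}_0 = 0$, $\mathsf{s}_1 = -\mathsf{s}_4$, $\mathsf{s}_2 = -\mathsf{s}_3$, and $\mathsf{s}_1 = s^{-1}_2\,$-type golden-ratio relations hold (the precise relation being $\mathsf{s}_1/\mathsf{s}_2 = -\,(\mathsf{s}_1+\mathsf{s}_2)/\mathsf{s}_1$ or, more simply, $\mathsf{s}_1 = \mathsf{s}_2\cdot\varphi$ up to sign, which is what produces the scalar $s^{-1}_2$ prefactors in \re{Ipart}--\re{IIpart}). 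Post-multiplying $\Phi_5$ by $X_5$ simply scales column $n$ of $\Phi_5$ by $\mathsf{s}_n$; since column $0$ is killed ($\mathsf{s}_0=0$), the resulting matrix $M:=\Phi_5 X_5$ has a zero first column, and its remaining entries are $5^{-1/2} q^{kn}\mathsf{s}_n$.

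Next I would split $M$ into its action on $P_d$-symmetric and $P_d$-antisymmetric vectors. Recall $P_d = J_5 C_5$ acts on a column by the reflection $n \mapsto -n \bmod 5$, so symmetric vectors have the form $(a,b,c,c,b)^{\intercal}$ and antisymmetric ones $(0,b,c,-c,-b)^{\intercal}$. The key computation is to evaluate $M f^{(s)}$ and $M f^{(a)}$ using the pairing $q^{kn} + q^{k(5-n)} = 2\cos(2\pi kn/5)$ for the symmetric case and $q^{kn} - q^{k(5-n)} = 2\mathrm{i}\sin(2\pi kn/5)$ for the antisymmetric case, together with $\mathsf{s}_{5-n} = -\mathsf{s}_n$. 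For the symmetric input, the $\mathsf{s}_n$ weights are odd while the $q$-pairing is even, so many terms telescope; for the antisymmetric input both factors are odd so their product is even and one again gets cancellations. I would then define ${\mathcal A}^{(s)}$ to be precisely the part of $s_2 M$ (real part structure) that produces these vanishing combinations — i.e. the matrix whose rows are built from the odd sine-weight times even cosine pattern — verify directly that ${\mathcal A}^{(s)} f^{(s)} = 0$ for all $a,b,c$, and collect the leftover (necessarily purely imaginary, giving the ${\rm i}\,{\mathcal B}^{(s)}$ term) into a sparse matrix ${\mathcal B}^{(s)}$, checking sparsity by counting its nonzero entries. The antisymmetric decomposition \re{IIpart} is handled symmetrically: ${\mathcal A}^{(a)}$ is the complementary block annihilating $f^{(a)}$, and ${\mathcal B}^{(a)}$ is what remains, again manifestly sparse.

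Concretely, I expect the cleanest route is not to guess ${\mathcal A}^{(s)}$ and ${\mathcal B}^{(s)}$ abstractly but to write $M$ as a numerical/symbolic array in the $\mathsf{s}_n$ and $q^k$, then observe that its columns $2$--$5$ come in the reflection-paired combinations $(\text{col }n) \pm (\text{col }5-n)$; regrouping $M$ according to these combinations automatically separates the ``acts on symmetric part'' block from the ``acts on antisymmetric part'' block, because $M f^{(s)}$ only sees the $+$ combinations and $M f^{(a)}$ only the $-$ combinations. Within each block, the annihilator piece is the rank-deficient summand forced by the identity $\sum_{n}\mathsf{s}_n(\text{even in }n) = 0$ type relations, and the sparse piece is the residue. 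Finally I would record the explicit $5\times 5$ forms of all four matrices ${\mathcal A}^{(s)},{\mathcal B}^{(s)},{\mathcal A}^{(a)},{\mathcal B}^{(a)}$ so that sparsity and the annihilation properties can be read off directly, and confirm the scalar prefactor $s^{-1}_2$ emerges from the golden-ratio relation among $\mathsf{s}_1$ and $\mathsf{s}_2$.

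The main obstacle I anticipate is bookkeeping rather than conceptual: correctly tracking the trigonometric identities $2\cos(2\pi/5) = (\sqrt5-1)/2$, $2\cos(4\pi/5) = -(\sqrt5+1)/2$ and their products with the $\mathsf{s}_n$ so that the cancellations in ${\mathcal A}^{(s)} f^{(s)} = 0$ and ${\mathcal A}^{(a)} f^{(a)} = 0$ are airtight and the prefactor $s^{-1}_2$ comes out exactly, with no stray constants. A secondary subtlety is making the word \emph{sparse} precise — presumably ${\mathcal B}^{(s)}$ and ${\mathcal B}^{(a)}$ each have only a handful of nonzero entries (ideally two-diagonal or near it, mirroring \re{X5epsn}--\re{Y5en}) — so I would state the exact nonzero pattern rather than leave it qualitative.
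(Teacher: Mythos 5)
Your approach coincides with the paper's: both compute $\Phi_5 X_5$ explicitly (a matrix with zero first column and entries $5^{-1/2}q^{kn}\mathsf{s}_n$, which the relations $s_1 s_2=\sqrt5$ and $s_2=c_1 s_1$ turn into $s_2^{-1}$ times a matrix in $q$ and $c_1$) and then split it by inspection into rows of the form $(0,\alpha,\beta,\mp\beta,\mp\alpha)$, which annihilate vectors of the given $P_d$-parity, plus a sparse remainder; the paper organizes this row-by-row via the identities $q=q^4+\mathrm{i}\,s_1$ and $q^2=q^3+\mathrm{i}\,s_2$, whereas you organize it by reflection-paired columns, but it is the same computation. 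Since the proposition's real content is the explicit matrices $\mathcal{A}^{(s)},\mathcal{B}^{(s)},\mathcal{A}^{(a)},\mathcal{B}^{(a)}$ needed later in Section 3, the step you defer --- actually writing them out and counting their nonzero entries --- is precisely what the paper's proof consists of, and your plan would reproduce it.
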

\begin{proof}
The operator $\Phi_5 X_5$ is represented by a traceless matrix
\be
\Phi_5 X_5 = \frac{1}{s_2}\,\left[\begin{array}{ccccc} 
0   &    1     &      c_1      &       - c_1     &  -1       \\
0   &    q     &  c_1 q^2  &  - c_1 q^3  &  - q^4  \\
0   &  q^2  &  c_1 q^4  &    - c_1 q     &  - q^3  \\
0   &  q^3  &   c_1 q      & - c_1 q^4   &  - q^2  \\
0   &  q^4  &   c_1 q^3 & - c_1 q^2   &  -  q
\end{array}\right]\,.
\lab{phi5x5}
\ee
From (\ref{phi5x5}) it follows that the first row annihilates an arbitrary 
$P_d$-symmetric vector $f^{(s)}:=(a,b,c,c,b)^{\intercal}$, that is, 
$(0,1,c_1, - c_1, -1)\,f^{(s)} = 0.$  In addition, with the aid of simple 
identities $q = q^4 + {\rm i} s_1$ and $q^2 = q^3 + {\rm i} s_2$, the 
second and third rows in  (\ref{phi5x5}) can be rewritten as 
\[
\Big(0, q^4 + {\rm i} s_1, c_1(q^3 + {\rm i} s_2), - c_1 q^3,- q^4\Big)=
\Big(0, q^4, c_1 q^3, - c_1 q^3,- q^4\Big) + {\rm i}\,\Big(0, s_1, c_1 s_2, 0, 0\Big), 
\]
\be
\Big(0, q^3 + {\rm i} s_2, c_1(q - {\rm i} s_1), - c_1 q,- q^3\Big)= 
\Big(0, q^3, c_1 q, - c_1 q,- q^3\Big) + {\rm i}\,\Big(0, s_2, -  s_2, 0, 0\Big), 
\lab{2r3r}
\ee
respectively.

Similarly, the fourth and fifth rows in (\ref{phi5x5}) can be represented
as  
\[
\Big(0, q^3, c_1 q,  - c_1 (q - {\rm i} s_1) ,- (q^3 + {\rm i} s_2)\Big)=
\Big(0, q^3, c_1 q, - c_1 q,- q^3\Big) + {\rm i}\,\Big(0, 0, 0, s_2, - s_2\Big), 
\]
\be
\Big(0, q^4 , c_1 q^3, - c_1( q^3 + {\rm i} s_2), - ( q^4 + {\rm i} s_1)\Big)= 
\Big(0, q^4, c_1 q^3, - c_1 q^3,- q^4\Big) -  {\rm i}\,\Big(0, 0, 0, c_1 s_2, s_1\Big), 
\lab{2r3r}
\ee
respectively. Hence the initial matrix (\ref{phi5x5}) can be divided into two parts,
$\Phi_5 X_5 = s^{-1}_2\,{\mathcal A}^{(s)} + {\rm i}\, {\mathcal B}^{(s)}$,
where
\be
{\mathcal A}^{(s)} = \left[\begin{array}{ccccc} 
0   &    1     &      c_1      &    - c_1     &  -1       \\
0   &  q^4  &  c_1 q^3  & - c_1 q^3 &  - q^4  \\
0   &  q^3  &   c_1 q      &   - c_1 q   &  - q^3  \\
0   &  q^3  &   c_1 q      &   - c_1 q   &  - q^3  \\
0   &  q^4  &   c_1 q^3 & - c_1 q^3 & - q^4
\end{array}\right]\,,\qquad
{\mathcal B}^{(s)} = \left[\begin{array}{ccccc} 
0   &    0    &      0    &   0    &     0   \\
0   & - c_2 &   c_1   &   0    &     0    \\
0   &    1    &    - 1   &   0     &    0    \\
0   &    0    &     0    &   1     &  - 1    \\
0   &    0    &     0    & - c_1 &  c_2
\end{array}\right]\,.
\lab{macABs}
\ee
From the explicit form of the matrix ${\mathcal A}^{(s)}$ in (\ref{macABs}) 
it is evident that ${\mathcal A}^{(s)} f^{(s)}=0$. In addition, with only  
eight non-zero elements,  the $5\times 5$ matrix  ${\mathcal B}^{(s)}$ is 
a sparse matrix. Consequently, identity (\ref{Ipart}) is proved with the 
identification of the explicit forms of both operators ${\mathcal A}^{(s)}$ 
and the matrix ${\mathcal B}^{(s)}$.

Alternatively, the same matrix (\ref{phi5x5}) can be split into two parts: $\Phi_5 X_5 
= s^{-1}_2\,\Big({\mathcal A}^{(a)}\, + \, {\mathcal B}^{(a)}\Big)$, where 
\be
{\mathcal A}^{(a)} = \left[\begin{array}{ccccc} 
0   &  - 1      &   -  c_1       &     - c_1     &  -1       \\
0   & - q^4  & -  c_1 q^3 & - c_1 q^3 &  - q^4  \\
0   & - q^3  & -  c_1 q      &  - c_1 q    &  - q^3  \\
0   &  q^3   &    c_1 q      &    c_1 q     &    q^3  \\
0   &  q^4   &   c_1 q^3  &   c_1 q^3  &   q^4
\end{array}\right]\,,\qquad
{\mathcal B}^{(a)} = \left[\begin{array}{ccccc} 
0   &     2   &   2 c_1  &       0     &     0   \\
0   &  c_1  &   - 1      &        0     &     0    \\
0   &  c_2  &  c^2_1 &        0     &     0    \\
0   &    0    &     0      & - c^2_1 &  - c_2 \\
0   &    0    &     0      &     1        &   - c_1
\end{array}\right]\,.
\lab{macABa}
\ee
From the explicit form of matrix ${\mathcal A}^{(a)}$ in (\ref{macABa}) it 
is evident that ${\mathcal A}^{(a)} f^{(a)}=0$. The $5\times 5$ matrix  
${\mathcal B}^{(a)}$, with only 10 nonzero elements,  is also a sparse 
matrix. Consequently, identity (\ref{IIpart}) is also proved, with the 
identification of the explicit forms of both operators ${\mathcal A}^{(a)}$ 
and the matrix ${\mathcal B}^{(a)}$. This completes the proof of the proposition.
\end{proof}

\section{Eigenvectors and eigenvalues of the discrete number operator $\mathcal{N}_5$}

This section begins with the derivation of an explicit form of the eigenvectors and 
eigenvalues of discrete number operator $\mathcal{N}_5$.

{\bf 1°}. Similar to the continuous  case ({\ref{LO}), the lowest eigenvector $f_0$ of 
the discrete number operator $\mathcal{N}_5$ is obtained by solving the difference
equation 
\be
A_5\, f_0\,=\,\frac{1}{\sqrt 2} \Big(X_5 + D_5\Big)\,f_0 = 0\,.
\lab{f0}
\ee
This vector $f_0$ must be $P_d$-symmetric, i.e. it can be written as 
$f_0=(x_0,x_1,x_2,x_2,x_1)^{\intercal}.$ Therefore the matrix form 
of  Equation (\ref{f0}) is: 
\[
A_5\,f_0=\frac{1}{\sqrt 2} \left[\begin{array}{ccccc} 
0   &   1    &  0   &   0    &  -1  \\
-1  &  s_1 &  1   &   0    &   0  \\
0   &  -1    & s_2&   1    &   0 \\
0   &   0    & -1  & - s_2 &   1 \\
1   &   0    &  0  & - 1     & - s_1
\end{array}\right]
\left[\begin{array}{c}
x_0    \\
x_1   \\
x_2  \\
x_2  \\
x_1 
\end{array}\right]=
\frac{1}{\sqrt 2} \left[\begin{array}{c}
                   0              \\
s_1 x_1 + x_2  - x_0  \\
s_2 x_2  + x_2  - x_1 \\
x_1 -  s_2 x_2  -  x_2 \\
x_0  -  s_1 x_1  -x_2
\end{array}\right]= 0\,.
\lab{Maf0}
\] 
Thus, only two linearly independent equations 
\be
x_0\,=\,s_1 x_1 + x_2  \,, \quad  x_1\,=\,x_2\,(1 + s_2)\,,
\lab{cof0}
\ee
interconnecting components $x_0$, $x_1$ and $x_2$  were obtained.
Hence the lowest eigenvector 
\be
f_0\,=\,x_2\, ( \xi_0, \xi_1, 1, 1, \xi_1)^{\intercal}, \qquad  
\xi_0\,=\,s_1 - 2 c_2, \quad \xi_1\,=\,1 + s_2\,,
\lab{f0ex}
\ee
is determined up to the multiplicative factor $x_2$, the explicit form of which 
is given as follows.

It is directly verified that $\Phi_5 f_0 = f_0$; moreover, from this formula 
and the second identity in \re{inter} it follows that: 
\be
\Phi_5 f_k\,=\,{\rm{i}}^k f_k\,,\qquad k\in {\mathbb Z_5}\,.
\lab{evfkPhi}
\ee
Recall that  the functions $\psi_n(x)$ from  \re{psi} possess a simple 
transformation property with respect to the Fourier transform: {\it they are 
eigenfunctions of the Fourier transform, associated with the eigenvalues 
${\rm i}^{\,n}$},
\be
\left({\bf{\mathcal F}}\,\psi_n\right)(x) \equiv \frac{1}{\sqrt{2\pi}}\,
\int_{\mathbb R}\,e^{\,{\rm i}\,x\,y}\,\psi_n(y)\,dy = {\rm i}^{\,n}
\,\psi_n(x)\,.                                                                         
\lab{FTpsi}
\ee
Thus, Equation \re{evfkPhi} is a discrete analog of the continuous case  \re{FTpsi}.

It should be emphasized that owing to the simplicity of defining Equation ({\ref{f0}) for 
the lowest eigenvector $f_0$, there was no need to use Proposition $1$ in the first step. 

 {\bf 2°}. To define next to the $f_0$ eigenvector $f_1$, we evaluate 
\[
A^{\intercal}_5\, f_0\,=\,\frac{1}{\sqrt 2} \Big( X_5 - D_5\Big)\,f_0 = 
\frac{1}{\sqrt 2} \Big( X_5 -\,\rm{ i}\,\Phi_5 X_5 {\Phi}_5^{\dagger} \Big)\,f_0 = 
\]
\be
= \frac{1}{\sqrt 2} \Big(X_5 -\,\rm{ i}\,\Phi_5 X_5\Big)\,f_0 = 
\frac{1}{\sqrt 2} \Big( X_5 + {\mathcal B}^{(s)} \Big)\,f_0 \,,
\lab{A(int)f0}
\ee
by using the unitary equivalence of operators $ -\,\rm{ i}\,D_5$ and $X_5$ in 
the first step, the identity ${\Phi}_5^{\dagger}\,f_0 = f_0$, which follows from 
\re{evfkPhi}, in the second step, and the identity \re{Ipart} in the third step.
Upon rewriting \re{A(int)f0} in matrix form, we arrive at the following:
\be
A^{\intercal}_5\, f_0\,=\,\frac{x_2}{\sqrt 2} \left[\begin{array}{ccccc}
    0   &        0         &       0       &      0      &    0      \\
    0   & - c_2 \xi_1 &    c_1      &      0      &    0      \\
    0   &        1         & \xi_1-2  &      0     &    0      \\
    0   &        0         &       0      & 2 - \xi_1 & - 1      \\
    0   &        0         &       0      &    - c_1   & c_2 \xi_1
\end{array}\right]  \left[\begin{array}{c}
 \xi_0  \\
 \xi_1  \\
   1     \\
   1     \\
 \xi_1
\end{array}\right]= 2^{1/2} x_2 s_1 \left[\begin{array}{c}
    0     \\
 \xi_1  \\
   c_1  \\
 -  c_1 \\
 - \xi_1
\end{array}\right]\,,
\lab{A(int)f0F}
\ee
where the readily verified identity $c_2\,\xi_1^2 = c_1  - 2 s_1 {\xi_1}$ is 
considered. Hence, the unit-length $P_d$-antisymmetric eigenvector $f_1$ 
is of the form
\be
f_1\,=\,\frac{1}{2\sqrt{s_2 \xi_1}}\,\Big( 0,  \xi_1, c_1, - c_1, -  \xi_1\Big)^{\intercal}\,.
\lab{f1}
\ee
To find explicitly the eigenvalue $\lambda_1$ of the discrete number operator 
$\mathcal{N}_5 = A^{\intercal}_5\, A_5$, associated with the eigenvector $f_1$,
one evaluates first
\[
A_5\, f_1\,=\,\frac{1}{\sqrt 2} \Big( X_5 + D_5\Big)\,f_1 = 
\frac{1}{\sqrt 2} \Big( X_5 +\,\rm{ i}\,\Phi_5 X_5 {\Phi}_5^{\dagger} \Big)\,f_1 = 
\]
\[
=\frac{1}{\sqrt 2} \Big(X_5 + \Phi_5 X_5\Big)\,f_1 = 
\frac{1}{\sqrt 2} \Big( X_5 + s^{-1}_2\,{\mathcal B}^{(a)} \Big)\,f_1 =  u\,,
\]
\be
u\,=\,\frac{1}{2\sqrt{2 s_2 \xi_1}}\,\Big( 2 \xi_1,  s_1 \xi_1 +  c_1, 
c_2 - c^2_1 s_2, c_2 - c^2_1 s_2, s_1 \xi_1 +  c_1,\Big)^{\intercal}\,,
\lab{A5f1}
\ee
by using the unitary equivalence of operators $ -\,\rm{ i}\,D_5$ and $X_5$ in the 
first step, the identity ${\Phi}_5^{\dagger}\,f_1 = - {\rm i} f_1$, which follows from 
\re{evfkPhi}, in the second step, and the identity \re{IIpart} in the third step. It is 
readily verified that the $P_d$-symmetric vector $u$ is the eigenvector of the DFT 
operator $\Phi_5$, that is, $\Phi_5\,u = u$. Therefore one may use identity \re{Ipart} 
to evaluate
\[
A^{\intercal}_5 A_5\, f_1\,=\,A^{\intercal}_5\,u = 
\frac{1}{\sqrt 2} \Big( X_5 - D_5\Big)\,u = 
\frac{1}{\sqrt 2} \Big( X_5 -\,\rm{ i}\,\Phi_5 X_5 {\Phi}_5^{\dagger} \Big)\,u = 
\]
\be
=\frac{1}{\sqrt 2} \Big(X_5 -\,\rm{ i} \Phi_5 X_5\Big)\,u = 
\frac{1}{\sqrt 2}  \Big( X_5 +\,{\mathcal B}^{(s)} \Big)\,u\,=
\,\lambda_1\,f_1\,,\qquad \lambda_1= [c_1 ( s_2 - 1) + 7]/2 \,.
\lab{lambda1}
\ee

 {\bf 3°}. To define the next eigenvector $f_2$, we evaluate 
\[
A^{\intercal}_5\, f_1\,=\,\frac{1}{\sqrt 2} \Big( X_5 - D_5\Big)\,f_1 = 
\frac{1}{\sqrt 2} \Big( X_5 -\,\rm{ i}\,\Phi_5 X_5 {\Phi}_5^{\dagger} \Big)\,f_1 = 
\]
\be
= \frac{1}{\sqrt 2} \Big(X_5 - \Phi_5 X_5\Big)\,f_1 
= \frac{1}{\sqrt 2} \Big( X_5 - s^{-1}_2 {\mathcal B}^{(a)} \Big)\,f_1 \,,
\lab{A(int)f1}
\ee
using  the identity ${\Phi}_5^{\dagger}\,f_1 = - \rm{ i}\, f_1$ and identity \re{IIpart}. 
Upon rewriting the right-hand side of \re{A(int)f1} in matrix form, we obtain 
\be
A^{\intercal}_5\, f_1\,=\,\frac{1}{2{\xi}^{1/2}_1\,{(2 s)}^{3/2}_2}
    \left[\begin{array}{ccccc}
    0   &      - 2       &   - 2 c_1  &      0     &    0      \\
    0   &    - c_2     &      1        &      0     &    0      \\
    0   &    - c_2     &      1        &      0     &    0      \\
    0   &        0       &       0      &     - 1    &   c_2    \\
    0   &        0       &       0      &    -  1    &   c_2 
\end{array}\right]  \left[\begin{array}{c}
    0     \\
 \xi_1  \\
   c_1  \\
 - c_1  \\
 - \xi_1
\end{array}\right]= \sqrt{s_1 (s_1 - c_2)/2}\,f_2\,,
\lab{A(int)f1F}
\ee
where the unit-length $P_d$-symmetric eigenvector $f_2$ is of the form
\be
f_2\,=\,\frac{1}{2\,s_2}\,\Big( -  2 c_1,  1, 1, 1, 1\Big)^{\intercal}\,.
\lab{f2}
\ee
To find explicitly the eigenvalue $\lambda_2$ of the discrete number operator 
$\mathcal{N}_5 = A^{\intercal}_5\, A_5$, associated with the eigenvector $f_2$,
one evaluates first
\[
A_5\, f_2\,=\,\frac{1}{\sqrt 2}\Big( X_5 + D_5\Big)\,f_2 
= \frac{1}{\sqrt 2}\Big( X_5 +\,\rm{ i}\,\Phi_5 X_5 {\Phi}_5^{\dagger} \Big)\,f_2 = 
\]
\be
= \frac{1}{\sqrt 2} \Big(X_5 -\,\rm{ i} \Phi_5 X_5\Big)\,f_2 = 
\frac{1}{\sqrt 2}\Big( X_5 + \,{\mathcal B}^{(s)} \Big)\,f_2 = \sqrt{s_1 (s_1 - c_2)/2}\,\,f_1\,,
\lab{A5f2}
\ee
using the identity ${\Phi}_5^{\dagger}\,f_2 = - f_2$, which follows from \re{evfkPhi}, 
and identity \re{Ipart}. Then, from \re{A(int)f1F} and \re{A5f2} it follows that 
\be
A^{\intercal}_5 A_5\, f_2\,=\,\sqrt{s_1 (s_1 - c_2)/2} \, A^{\intercal}_5\,f_1 = 
[s_1 (s_1 - c_2)/2]\,f_2\,=\,\lambda_2\,f_2\,,\qquad \lambda_2= s_1 (s_1 - c_2)/2\,.
\lab{lambda2}
\ee

 {\bf 4°}. To define the next eigenvector $f_3$, we evaluate
\[
A^{\intercal}_5\, f_2\,=\, \frac{1}{\sqrt 2} \Big( X_5 - D_5\Big)\,f_2 = 
 \frac{1}{\sqrt 2}\Big( X_5 -\,\rm{ i}\,\Phi_5 X_5 {\Phi}_5^{\dagger} \Big)\,f_2 = 
\]
\be
=  \frac{1}{\sqrt 2} \Big(X_5 +\,\rm{ i}\,\Phi_5 X_5\Big)\,f_2 = 
 \frac{1}{\sqrt 2}\Big( X_5 -  {\mathcal B}^{(s)} \Big)\,f_2 \,,
\lab{A(int)f2}
\ee
using the identity ${\Phi}_5^{\dagger}\,f_2 = - f_2$, which follows from 
\re{evfkPhi},  and identity \re{Ipart}. Upon rewriting the right-hand side
of Equation \re{A(int)f2} in matrix form, we arrive at the following:
\be
A^{\intercal}_5\, f_2\,=\,\frac{1}{2^{3/2}  s_2} \left[\begin{array}{ccccc}
    0   &        0         &       0       &      0      &    0      \\
    0   & s_1 + c_2  &   -  c_1    &      0      &    0      \\
    0   &      - 1         &   \xi_1    &      0      &     0      \\
    0   &        0         &       0      & - \xi_1   &    1      \\
    0   &        0         &       0      &   c_1     & - (s_1 + c_2)
\end{array}\right]  \left[\begin{array}{c}
 - 2 c_1 \\
     1     \\
    1     \\
    1     \\
    1
\end{array}\right]= \frac{1}{2^{3/2} c_1} \left[\begin{array}{c}
      0     \\
 1 - s_2 \\
    c_1   \\
 -  c_1  \\
 s_2 - 1
\end{array}\right]\,.
\lab{A(int)f2F}
\ee
Hence the unit-length $P_d$-antisymmetric eigenvector $f_3$ is of the form
\be
f_3\,=\,\frac{1}{2\sqrt{s_2 (s_2 - 1)}}\,\Big( 0, 1 -  s_2, c_1, -\,c_1, s_2 - 1\Big)^{\intercal}\,.
\lab{f3}
\ee
To find explicitly the eigenvalue $\lambda_3$ of the discrete number operator 
$\mathcal{N}_5 = A^{\intercal}_5\, A_5$, associated with the eigenvector $f_3$,
one evaluates first
\[
A_5\, f_3\,=\,\frac{1}{\sqrt 2} \Big( X_5 + D_5\Big)\,f_3 = 
\frac{1}{\sqrt 2} \Big( X_5 +\,\rm{ i}\,\Phi_5 X_5 {\Phi}_5^{\dagger} \Big)\,f_3 = 
\]
\be
=\frac{1}{\sqrt 2} \Big(X_5 - \Phi_5 X_5\Big)\,f_3 = 
\frac{1}{\sqrt 2}\Big( X_5 -  s^{-1}_2\,{\mathcal B}^{(a)} \Big)\,f_3 = \sqrt{s_1 (s_1 + c_2)/2}\,f_2\,,
\lab{A5f3}
\ee
using the identity ${\Phi}_5^{\dagger}\,f_3 =  {\rm  i} f_3$, which follows from 
\re{evfkPhi},  and identity \re{IIpart}. Then, from \re{A(int)f2F} and \re{A5f3}, 
it follows that 
\be
A^{\intercal}_5 A_5\, f_3\,=\,\sqrt{s_1 (s_1 + c_2)/2}\,A^{\intercal}_5\,f_2 = 
[s_1 (s_1 + c_2)/2]\,f_3 = \lambda_3\,f_3\,,\qquad \lambda_3 = s_1 (s_1 + c_2)/2\,.
\lab{lambda3}
\ee

{\bf 5°}. Finally, to define the last eigenvector $f_4$, we evaluate
\[
A^{\intercal}_5\, f_3\,=\, \frac{1}{\sqrt 2} \Big( X_5 - D_5\Big)\,f_3 = 
\frac{1}{\sqrt 2}\Big( X_5 -\,\rm{ i}\,\Phi_5 X_5 {\Phi}_5^{\dagger} \Big)\,f_3 = 
\]
\be
=\frac{1}{\sqrt 2} \Big(X_5 + \Phi_5 X_5\Big)\,f_3 = 
\frac{1}{\sqrt 2}\Big( X_5  +  s^{-1}_2 {\mathcal B}^{(a)} \Big)\,f_3 \,,
\lab{A(int)f3}
\ee
using  the identity ${\Phi}_5^{\dagger}\,f_3 = \rm{ i}\, f_3$ and identity \re{IIpart}. 
Upon rewriting the right-hand side of \re{A(int)f3} in matrix form, we obtain 
\be
A^{\intercal}_5\, f_3\,=\,\frac{1}{2 c_1 s_2 \sqrt{2 \lambda_3}}
    \left[\begin{array}{ccccc}
    0   &         2        &      2 c_1   &       0      &         0       \\
    0   &  3 c_1 + 1  &     -  1       &       0      &         0       \\
    0   &       c_2      &  3 - 2 c_ 1 &       0      &         0       \\
    0   &        0        &        0        & 2 c_1 - 3 &    - \, c_2  \\
    0   &        0        &        0        &    -  1       & -\,(3 c_ 1+ 1) 
\end{array}\right]  \left[\begin{array}{c}
    0      \\
1 - s_2 \\
  c_1    \\
- c_1   \\
s_2 - 1
\end{array}\right]= \,v\,,
\lab{A(int)f3F}
\ee
where the $P_d$-symmetric vector $v$ is of the form
\be
v\,=\,\frac{\sqrt{\lambda_3}}{2 s_2}\,\Big(2 c_1, -\,(2s_2 + 1),  2s_2 + 3 - 2c_1,  
2s_2 + 3 - 2c_1,  - \,(2s_2 + 1)\Big)^{\intercal}\,.
\lab{v}
\ee
It is readily verified now that the $P_d$-symmetric vector $v$ is the eigenvector of 
the DFT operator $\Phi_5$, that is, $\Phi_5\,v =  v$. To find explicitly the eigenvalue 
$\lambda_4$ of the discrete number operator $\mathcal{N}_5 = A^{\intercal}_5\, A_5$, 
associated with the eigenvector $v$, one evaluates first
\[
A_5\, v\,=\,\frac{1}{\sqrt 2} \Big( X_5 + D_5\Big)\,v = 
\frac{1}{\sqrt 2} \Big( X_5 +\,\rm{ i}\,\Phi_5 X_5 {\Phi}_5^{\dagger} \Big)\,v = 
\]
\be
=  \frac{1}{\sqrt 2} \Big(X_5  + \,\rm{ i} \Phi_5 X_5\Big)\,v = 
\frac{1}{\sqrt 2} \Big( X_5 - \,{\mathcal B}^{(s)} \Big)\,v =  
\frac{1}{\sqrt 2}\,( 7 - c_1  - c_1  s_2 )\,f_3\,.
\lab{A5g2}
\ee
Then from \re{A(int)f3F} and \re{A5g2} it follows at once that 
\be
A^{\intercal}_5 A_5\, v\,=\, \frac{1}{\sqrt 2} ( 7 - c_1 -  c_1  s_2 )\, A^{\intercal}_5\,f_3 = 
\frac{1}{\sqrt 2} ( 7 - c_1 -  c_1  s_2 )\,v\,=\,\lambda_4\,v\,,
\quad \lambda_4= [7 - c_1 (1  + s_2)]/2\,.
\lab{lambda4}
\ee
Therefore, a $P_d$-symmetric eigenvector $f_4$ of unit length is simply a properly
normalized version of vector $v$, that is, $v = {\sqrt{\lambda_4}}\,f_4$ and
\be
f_4\,=\,\frac{1}{\sqrt{\lambda_2\,\lambda_4}}\,\Big( 2,  c_2 -  2 s_1,  
2 s_1 -  c_2 + 2 c_1, 2 s_1 -  c_2 +  2 c_1,  c_2 - 2  s_ 1\Big)^{\intercal}\,.
\lab{f4}
\ee
\begin{remark} 
It should be noted that by defining all the eigenvalues $\lambda_1=[c_1 (s_2 - 1)+ 7]/2 $,
$\lambda_2=s_1(s_1 - c_2)/2$, $\lambda_3=s_1(s_1 + c_2)/2$, and $\lambda_4=[7 - c_1 (1  + s_2)]/2$, 
it is possible to uniformly write all the preceding $f_4$ eigenvectors of unit length as
\[
f_0 = \frac{2}{\sqrt{\lambda_2 \lambda_4}}\,( s_1- 2 c_2, 1 + s_2, 1, 1, 1 + s_2)^{\intercal},
\qquad
f_1\,=\,\frac{1}{\sqrt{2 \lambda_2}}\,\Big( 0,  s_1- c_2, 1, - 1, c_2 - s_1 \Big)^{\intercal},
\]
\be
f_2\,=\,\frac{1}{\,\sqrt{\lambda_2\,\lambda_3}}\,\Big( 2, c_2, c_2, c_2, c_2 \Big)^{\intercal},
\qquad
f_3\,=\,\frac{1}{\sqrt{2 \lambda_3}}\,\Big( 0, - (s_1 + c_2), 1, - 1,  s_1 + c_2 \Big)^{\intercal}\,.
\lab{f0-3}
\ee
Note also that the eigenvectors $f_k,\, k \in {\mathbb Z_5}$, from  \re{f4} and \re{f0-3} are 
orthonormalized, that is,
\be
(f_k, f_l)\,=\,\delta_{k,l}, \qquad  k,l \in {\mathbb Z_5}\,.
\label{ortno}
\ee
\end{remark} 

\section{Explicit forms of discrete analogs}

{\bf 1°}. Thus, the above formulated approach to finding the eigenvectors of the 
discrete number operator $\mathcal{N}_5 = A^{\intercal}_5\, A_5$ really leads to 
the definition of a certain hierarchy of the ladder type that these eigenvectors form.
It only remains to consider that the compact form of this hierarchy still contains another 
parameter, which can be interpreted as follows.

It is well known that if $A$ and $B$ are $n\times n$ matrices, then $AB$ and $BA$ have the 
same eigenvalues (see \cite{JHW}, p.54). Hence the  operator $\mathcal{N}^{\,(s)}_5 := 
A_5\, A^{\intercal}_5$ has the same $5$ eigenvalues $\lambda_n$ as the DFT operator
$\mathcal{N}_5 = A^{\intercal}_5\, A_5$, that is,
\be
\mathcal{N}^{\,(s)}_5 g_n = A_5\, A^{\intercal}_5 \,g_n = \lambda_n \,g_n\,,
\qquad n\in {\mathbb Z_5}\,.
\lab{N{s}}
\ee
Moreover, it is  readily verified that the eigenvectors $g_n$ and $f_n$ 
of the operators $\mathcal{N}^{\,(s)}_5 = A_5\, A^{\intercal}_5$ and 
$\mathcal{N}_5 = A^{\intercal}_5\, A_5$, respectively, are interrelated
as 
\[
g_0 = \sin{\varphi}\,f_0 + \cos{\varphi}\,f_4 = \eta\,\Big( \frac{s^{2}_1}{4}\, f_0 + f_4 \Big)\,, 
\]
\[
g_1 = \cos{\varphi}\,f_0 -  \sin{\varphi}\,f_4 = \eta\,\Big( f_0  -  \frac{s^{2}_1}{4} f_4 \Big)\,,
\]
\[
g_2\,=\,f_1\,,\qquad g_3\,=\,f_2\,,\qquad g_4\,=\,f_3\,,
\]
\[
\cos{\varphi} = 4 / \sqrt {21 - 5 c_2} = \eta\,, \qquad   \sin{\varphi} = 
{s^{2}_1}/ \sqrt {21 - 5 c_2} = \eta\,{s^{2}_1}/4 \,,
\]
\be
\varphi = \arctan\Big(s^{2}_1/4\Big) = \arctan{[(5 + \sqrt{5})/8]}\,=\,42,13^{\circ}\,,
\lab{gn}
\ee
where the parameter $\eta = \cos{\varphi}$ can also be expressed  in terms of eigenvalues 
$\lambda_1$ and $\lambda_4$ as $\eta = 8 s_2/ \sqrt {\lambda_1\,\lambda_4}$.

{\bf 2°}. Having defined the parameter $\eta$, the simple geometric interpretation 
of which is obvious from (\ref{N{s}}) and (\ref{gn}), it is now easy to show that this 
parameter also enters the discrete analog of (\ref{psi}}) as  
\be
f_n \,= \,\Big( \eta \prod^n_{k=1} \lambda^{1/2}_k\Big)^{-1} 
\Big( A^{\intercal}_5\Big)^n f_0\,,\qquad n=1,2,3,4 \,.
\lab{fnf0}
\ee
Recall that in the continuous case  $ N \psi_n = {\bf {a}^{\dagger}}{\bf a}\,
\psi_n = n\,\psi_n$, hence  multiplier $n!$  in Eq. (\ref{psi}) can be expressed as 
$n! = 1 \cdot 2 \cdot \cdot \cdot n = \lambda_1 \lambda_2 \cdot \cdot \cdot \lambda_n$
with $\lambda_m:= m$, confirming the similarity between  (\ref{psi}) and  (\ref{fnf0}).

Moreover, it turns out that the parameter $\eta$ essentially contributes into 
a discrete analogue of the three-term recurrence relation \cite{LL}
\be
\sqrt{2\,(n+1)}\,\psi_{n+1}(x) + \sqrt{2 n}\,\psi_{n-1}(x)= 2x\,\psi_n(x) \,,
\qquad n = 1,2,...\,,
\label{CRR}
\ee
associated with continuous cases (\ref{psi}). This  can be expressed as follows.

From (\ref{A(int)f3F}), (\ref{v}) and (\ref{f4}) it follows that $ A^{\intercal}_5\,f_3 
= \sqrt{\lambda_4}\, f_4$. Consequently, this can be rewritten as:
\be
\sqrt{\lambda_4}\, f_4 = A^{\intercal}_5\,f_3 = ({\sqrt 2}\,X_5 - A_5)\,f_3 
= {\sqrt 2}\, X_5 f_3 - \sqrt{\lambda_3}\, f_2\,,
\label{a}
\ee
where the evident identity $ A_5 + A^{\intercal}_5 = {\sqrt 2}\, X_5$  has been used 
first, followed by the identitity  $A_5\,f_3 = \sqrt{\lambda_3}\, f_2$ from (\ref{A5f3}) 
and (\ref{lambda3}).

Similarly, from (\ref{A(int)f2F}) and (\ref{f3}) it follows that $ A^{\intercal}_5\,f_2 
= \sqrt{\lambda_3}\, f_3$. Therefore,
\be
\sqrt{\lambda_3}\, f_3 = A^{\intercal}_5\,f_2 = ({\sqrt 2}\,X_5 - A_5)\,f_2 
= {\sqrt 2}\,X_5 f_2 - \sqrt{\lambda_2}\, f_1\,,
\label{b}
\ee
because $A_5\,f_2 = \sqrt{\lambda_2}\,f_1$  by Eq.(\ref{A5f2}).
Thus both identities (\ref{a}) and (\ref{b}) represent particular
cases of the three-term recurrence relation
\be
\sqrt{2 \lambda_{n+1}}\,f_{n+1} + \sqrt{2 \lambda_{n}}\, f_{n-1} = 2 X_5 f_n 
\label{c}
\ee
for $n=3$ and $n=2$, respectively. Note the similarity between the recurrence relations
(\ref{CRR})  and (\ref{c}).

Finally, from (\ref{A(int)f1F}) it follows that $ A^{\intercal}_5\,f_1 = \sqrt{\lambda_2}
\, f_2$; hence
\be
\sqrt{\lambda_2}\, f_2 = A^{\intercal}_5\,f_1 = ({\sqrt 2}\, X_5 - A_5)\,f_1 = 
{\sqrt 2}\, X_5 f_1 - \sqrt{\lambda_1}\, {\eta}\,\Big(f_0 + \frac{{\sqrt 5}\,c_2}{4}\,f_4\Big)\,,
\label{d}
\ee
where the readily verifiable identity 
\be
A_5\,f_1\,=\,\sqrt{\lambda_1}\,\Big(\cos{\varphi}\,f_0 - \sin{\varphi}
\,f_4\Big)=\sqrt{\lambda_1}\,{\eta}\,\,\Big(f_0 + \frac{{\sqrt 5}\,c_2}{4}\,f_4\Big)
\label{e}
\ee
has been used. Thus, Eq. (\ref{d}) represents a four-term recurrence relation 
\be
\sqrt{2 \lambda_2}\, f_2  + \sqrt{2 \lambda_1}\, {\eta}\,\Big(f_0 + \frac{{\sqrt 5}
\,c_2}{4}\,f_4\Big)= 2\, X_5 f_1\,,
\label{4RR}
\ee
where a linear combination of the eigenvectors $f_0$ and $f_4$ appears on the left 
side, instead of only $f_0$ [\,cf.(\ref{e})]. This difference between (\ref{c}) and 
(\ref{4RR}) is a consequence of the fact that operators $X_5$ and $Y_5=-\,{\rm i}D_5$  
do not satisfy the Heisenberg commutation relation $[x,p] = {\rm i}$ for the standard 
operators $x$ and $p$ in quantum mechanics (see \cite{AAZh} for a more detailed 
discussion of this point).

{\bf 3°}. Finally, the above formulas for the explicit form of the eigenvectors 
$f_k,\,1\leq k \leq 4$, allow us to represent them in the form  $f_k = d^{\,-1}_k
\,{\mathcal P}_k (X_5) f_0$, where ${\mathcal P}_k (X_5)$ is a polynomial in  
matrix $X_5$ of degree $k$. In this way one can obtain an explicit form of the 
discrete analog of the second part of formula (\ref{psi}) associated with 
the continuous case. This can be ascertained as follows.

From (\ref{A(int)f0})--(\ref{f1}) one readily derives that 
\be
f_1\,=\,\Big(\eta\,\sqrt{\lambda_1}\Big)^{-1} A^{\intercal}_5\,f_0 = 
\Big(\eta\,\sqrt{2 \lambda_1}\Big)^{-1} \Big(X_5 + {\mathcal B}^{(s)}\Big)f_0 
= d^{\,-1}_1\,{\mathcal P}_1(X_5) f_0\,,
\label{f1P1}
\ee
where ${\mathcal P}_1(X_5):=X_5 + {\mathcal B}^{(s)}$ and $d_1 = \eta\,\sqrt{2 \lambda_1}$.

From (\ref{A(int)f1}) and (\ref{A(int)f1F}) one similarly derives that 
\be
f_2\,=\,\frac{1}{\sqrt{\lambda_2}}\,A^{\intercal}_5\,f_1 = 
\Big(2 \eta\,\sqrt{\lambda_1 \lambda_2}\Big)^{-1} 
\Big(X_5 - s^{-1}_2 {\mathcal B}^{(a)}\Big) \Big(X_5 + {\mathcal B}^{(s)}\Big)f_0 
= d^{\,-1}_2\,{\mathcal P}_2(X_5) f_0\,,
\label{f2P2}
\ee
where
\be
{\mathcal P}_2(X_5) = \Big(X_5 - s^{-1}_2 {\mathcal B}^{(a)}\Big) \Big(X_5 
+ {\mathcal B}^{(s)}\Big), \quad d_2 = 2 \eta\,\sqrt{\lambda_1 \lambda_2}\,.
\label{P2}
\ee

Also, from (\ref{A(int)f2})--(\ref{f3}) one similarly gets that 
\be
f_3\,=\,\frac{1}{\sqrt{\lambda_3}}\,A^{\intercal}_5\,f_2 = 
d^{\,-1}_3 \Big(X_5 - {\mathcal B}^{(s)}\Big)\Big(X_5 - s^{-1}_2 {\mathcal B}^{(a)}\Big) 
\Big(X_5 + {\mathcal B}^{(s)}\Big)f_0 = d^{\,-1}_3\,{\mathcal P}_3(X_5) f_0\,,
\label{f3P3}
\ee
where
\be
{\mathcal P}_3(X_5) =  \Big(X_5 - {\mathcal B}^{(s)}\Big)
\Big(X_5 - s^{-1}_2 {\mathcal B}^{(a)}\Big) \Big(X_5 + {\mathcal B}^{(s)}\Big), 
\quad d_ 3 = \eta\,\sqrt{2^3 \lambda_1 \lambda_2 \lambda_3}.
\label{P3}
\ee
Finally, from (\ref{A(int)f3}), (\ref{v}), and  (\ref{f4}), it follows that 
\be
f_4\,=\,\frac{1}{\sqrt{\lambda_4}}\,A^{\intercal}_5\,f_3 = 
d^{\,-1}_4\,{\mathcal P}_4(X_5) f_0\,,
\label{f4P4}
\ee
where
\be
{\mathcal P}_4(X_5) =  \Big(X_5 + s^{-1}_2 {\mathcal B}^{(a)}\Big) 
\Big(X_5 - {\mathcal B}^{(s)}\Big) \Big(X_5 - s^{-1}_2 {\mathcal B}^{(a)}\Big) 
\Big(X_5 + {\mathcal B}^{(s)}\Big), \quad 
d_ 4 = 4 \eta\,\sqrt{ \lambda_1 \lambda_2 \lambda_3 \lambda_4}\,.
\label{P4}
\ee

Thus, the discrete analog of the formula $\psi_n(x) = c_n^{-1} H_n(x)\,\psi_0(x),\,\, 
c_n= \sqrt{2^n\, n!}$\,, associated with the continuous case, has the form: 
\be
f_n = d^{\,-1}_n\,{\mathcal P}_n (X_5) f_0, \qquad n=1,2,3,4, 
\label{fng}
\ee
where $d_n = \eta \prod^n_{k=1} (2 \lambda_k)^{1/2}$  and ${\mathcal P}_n (X_5)$  
are the Newtonian basis matrix polynomials in $X_5$ (see, e.g.\,\cite{VZh}--\cite{LVS3} 
and relevant references quoted therein on various applications of the Newtonian basis), 
defined as
\be
{\mathcal P}_0(X_5) = 1 , \qquad {\mathcal P}_n (X_5) = (X_5 - M_{n-1})
\cdots (X_5 - M_1)(X_5 - M_0),  \quad n=1,2,3,4\,,
\label{PnX5}
\ee
with the matrices $M_0=- M_2= -\,{\mathcal B}^{(s)}$ and  $M_1=- M_3= 
s^{-1}_2 {\mathcal B}^{(a)}$ as interpolation nodes at 0,1,2,3. It should be noted 
that by combining (\ref{fng}) with the orthonormality condition of eigenvectors $f_n$, 
it is easy to verify that  the polynomials ${\mathcal P}_n (X_5)$ are orthogonal:
\be
\Big({\mathcal P}_k(X_5) f_0,\,{\mathcal P}_l (X_5)  f_0\Big) = d^{\,2}_k\,\delta_{kl}\,.
\label{PnOrt}
\ee
Simultaneously, it is extremely important to emphasize that the matrix polynomials 
${\mathcal P}_n (X_5)$  do not belong to the class of  hypergeometric-type polynomials 
on a Newtonian basis, which necessarily satisfy the standard three-term recurrence 
relations  \cite{VZh}.

\section{Conclusions}

To summarize, the eigenvalues $\lambda_n$ and eigenvectors $f_n$ of the $5D$ discrete
number operator $\mathcal{N}_5 = A^{\intercal}_5\, A_5$ are evaluated in a systematic 
way. Because the eigenvalues $\lambda_n$ are represented by distinct non-negative numbers, 
the number operator  $\mathcal{N}_5$ has been used to classify eigenvectors of the $5D$ 
discrete Fourier transform $\Phi_5$, thus resolving the ambiguity caused by the well-known 
degeneracy of the eigenvalues of the discrete Fourier transform $\Phi_N$. A procedure for 
{\it sparsealization} the intertwining operators $A_5$ and $A^{\intercal}_5$ has been 
formulated, which made it possible to construct the discrete analog (\ref{fnf0}) of the 
well-known continuous-case formula $\psi_n(x) = \frac{1}{\sqrt{n!}}\, ({\bf {a}^{\dagger}})^n 
\psi_0(x)$. In addition, a discrete analog for the eigenvectors $f_n$ of the continuous-case 
formula $\psi_n(x) = c_n^{-1} H_n(x)\,\psi_0(x),\,\,c_n= \sqrt{2^n\, n!}$, has been established 
in terms of the Newtonian basis polynomials ${\mathcal P}_n (X_5),\,n\in {\mathbb Z_5}$, times 
the lowest eigenvector $f_0$. The methodology developed not only deepens the understanding 
of the discrete Fourier transform but also paves the way for advanced numerical methods in spectral
analysis. 

{\Large\bf Acknowledgments}

I am grateful to Luis Verde-Star and Alexei Zhedanov for their illuminating discussions.



\bb{99}

\bi{McCPar} J.H.McClellan and T.W.Parks, {\it Eigenvalue and eigenvector decomposition 
of the discrete Fourier transform}, IEEE Trans. Audio Electroac., {\bf AU-20}, 66--74, 1972.

\bi{AusTol} L.Auslander and R.Tolimieri, {\it Is computing with the finite Fourier transform 
pure or applied mathematics?}  Bull. Amer. Math. Soc., {\bf 1}, 847--897, 1979.

\bi{DicSte} B.W.Dickinson  and K.Steiglitz, {\it Eigenvectors and functions of the discrete 
Fourier transform}  IEEE Trans. Acoust. Speech, {\bf 30}, 25--31, 1982.

\bi{Mehta} M.L.Mehta, {\it Eigenvalues and eigenvectors of the finite Fourier transform}, 
J. Math. Phys., {\bf 28}, 781--785, 1987.

\bi{Matv} V.B.Matveev, {\it Intertwining relations between the Fourier transform and
discrete  Fourier transform, the related functional identities and beyond}, Inverse Prob., 
{\bf 17}, 633--657, 2001.

\bi{Ata} N.M.Atakishiyev, {\it On $q$-extensions of Mehta's eigenvectors of the finite
Fourier transform}, Int. J. Mod. Phys. A, {\bf 21}, 4993--5006, 2006.

\bi{HJ} R.A.Horn,  C.R.Johnson,  Matrix analysis, Cambridge University Press, 
Cambridge, 2009.

\bi{MesNat} M.K.Atakishiyeva and N.M.Atakishiyev, {\it On the raising and lowering difference 
operators for eigenvectors of the finite Fourier transform}, J. Phys: Conf. Ser., {\bf 597}, 
012012, 2015.

\bi{AA2016} M.K.Atakishiyeva and N.M.Atakishiyev, {\it On algebraic properties of the 
discrete raising and lowering operators, associated with the $N$-dimensional discrete 
Fourier transform}, Adv. Dyn. Syst. Appl., {\bf 11}, 81--92, 2016. 

\bi{4Open} M.K.Atakishiyeva, N.M.Atakishiyev and J.Loreto-Hern{\'a}ndez, {\it More 
on algebraic properties of the discrete Fourier transform raising and lowering operators}, 
4 Open, {\bf 2}, 1--11, 2019.

\bi{AAZh} M.K.Atakishiyeva, N.M.Atakishiyev and A.Zhedanov, {\it An algebraic interpretation 
of the intertwining operators associated with the discrete Fourier transform}, J. Math. Phys., 
{\bf 62}, 101704, 2021.

\bi{Zhe_AW} A.S.Zhedanov, {\it {\lq}{\lq}Hidden symmetry{\rq}{\rq} of Askey-Wilson 
polynomials}, Theoretical and Mathematical Physics {\bf 89}, 1146--1157, 1991.

\bi{Ter_AW} P.Terwilliger, {\it The Universal Askey-Wilson Algebra}, SIGMA {\bf 7}, 069, 2011.

\bi{Tom1} T.H.Koornwinder, {\it The relationship between Zhedanov's algebra AW(3) and
the double affine Hecke algebra in the rank one case}, SIGMA {\bf 3}, 063, 2007.

\bi{Tom2} T.H.Koornwinder, {\it Zhedanov's algebra AW(3) and the double affine Hecke 
algebra in the rank one case.II. The spherical subalgebra}, SIGMA {\bf 4}, 052, 2008.

\bi{Bas_H} P. Baseilhac, S. Tsujimoto, L. Vinet, and A. Zhedanov, {\it The Heun-Askey-Wilson 
Algebra and the Heun Operator of Askey-Wilson Type}, Annales Henri Poincar\'e {\bf 20} 
3091--3112, 2019.

\bi{LL} L.D.Landau, E.M.Lifshitz, Quantum Mechanics (Non-relativistic Theory), Pergamon
Press, Oxford, 1991.

\bi{HarmAnal} M.C.Pereyra and L.A.Ward, Harmonic analysis: from Fourier to wavelets, 
AMS, Providence, Rhode Island, 2012.

\bi{JHW} J.H.Wilkinson, The Algebraic Eigenvalue Problem, Clarendon Press, Oxford, 2004.

\bi{VZh} L.Vinet and A.Zhedanov, {\it Hypergeometric Orthogonal Polynomials
with respect to Newtonian Bases}, SIGMA {\bf 12}, 048, 2011.

\bi{LVS1} L.Verde-Star, {\it Characterization and construction of classical orthogonal 
polynomials using a matrix approach}, Linear Algebra and its Applications., {\bf 438}, 
3635--3648, 2013. 

\bi{LVS2} L.Verde-Star, {\it A unified construction of all the hypergeometric and basic 
hypergeometric families of orthogonal polynomial sequences}, Linear Algebra and its 
Applications., {\bf 627}, 242--274, 2021. 

\bi{LVS3} L.Verde-Star, {\it Linearization and connection coefficients of polynomial sequences: 
A matrix approach}, Linear Algebra and its Applications., {\bf 672}, 195--209, 2023. 

\end{thebibliography}

\end{document}